\newcommand{\nc}{\newcommand}
\newcommand{\rnc}{\renewcommand}
\rnc{\thesection}{\arabic{section}}
\rnc{\thesubsection}{\thesection.\arabic{subsection}}
\rnc{\thesubsubsection}{\thesubsection.\arabic{subsubsection}}
\newtheorem{definition}{Definition}
\newtheorem{proposition}[definition]{Proposition}
\newtheorem{lemma}[definition]{Lemma}
\newtheorem{theorem}[definition]{Theorem}
\newtheorem{corollary}[definition]{Corollary}
\newtheoremstyle{red}{}{}{\itshape}{}{\color{red!80!black}\bfseries}{.}{ }{}
\theoremstyle{red}
\DeclareMathOperator{\Tr}{Tr}
\DeclareMathOperator{\diag}{diag}
\DeclareMathOperator{\conv}{conv}
\nc{\psic}{\psi^{c}}
\nc{\two}[1]{\underline{2^{d-#1}}}
\rnc{\H}{\mathcal{H}}
\nc{\Hanc}{\mathcal{H}_{\text{anc}}}
\nc{\psianc}{\psi_{\text{anc}}}
\nc{\lampow}{\lambda^{1/d}}
\nc{\norm}[2]{\left\lVert#1\right\rVert_{\,#2}}
\nc{\proj}[1]{\ket{#1}\!\bra{#1}}
\nc{\pro}[1]{#1 #1^\dagger}
\nc{\lnorm}[2]{\left\lVert#1\right\rVert_{\ell_{#2}}}
\nc{\RR}{{{\mathbb R}}}
\nc{\CC}{{{\mathbb C}}}
\nc{\FF}{{{\mathbb F}}}
\nc{\NN}{{{\mathbb N}}}
\nc{\ZZ}{{{\mathbb Z}}}
\nc{\MIO}{{\text{\rm MIO}}}
\nc{\DIO}{{\text{\rm DIO}}}
\nc{\SIO}{{\text{\rm SIO}}}
\nc{\IO}{{\text{\rm IO}}}
\nc{\SEP}{{\text{SEP}}}
\nc{\NS}{{\text{NS}}}
\nc{\LOCC}{{\text{LOCC}}}
\nc{\PPT}{{\text{PPT}}}
\nc{\EXT}{{\text{EXT}}}
\nc{\OLOCC}{{\text{1-LOCC}}}
\nc{\SEPP}{{\text{SEPP}}}
\nc{\MC}{{\text{\rm MC}}}
\nc{\cE}{\mathscr{E}}
\rnc{\bar}{\;\rule{0pt}{9.5pt}\right|\;}
\nc{\lset}{\left\{\left.}
\nc{\rset}{\right\}}
\nc{\lsetr}{\left\{\,}
\nc{\rsetr}{\right.\right\}}
\nc{\barr}{\,\rule{0pt}{9.5pt}\left|\;}
\nc{\ketbra}[2]{\ket{#1}\!\bra{#2}}
\DeclareMathOperator*{\argmin}{arg\,min}
\nc{\logfloor}[1]{\left\lfloor {#1} \right\rfloor_{\log}}
\newcommand{\DD}{\mathbb{D}}
\newcommand{\cbraket}[1]{\left|\braket{#1}\right|}
\newcommand{\id}{\mathbbm{1}}
\newcommand{\B}{\mathcal{B}}
\renewcommand{\O}{\mathcal{O}}
\newcommand{\I}{\mathcal{I}}
\nc{\MM}{\widetilde{\M}}
\nc{\Ml}{\M^{\leq}}
\def\ve{\varepsilon}
\nc{\wt}{\widetilde}
\nc{\SDP}{\text{\rm SDP}}
\nc{\cc}{{\circ\circ}}
\nc{\mnorm}[1]{\norm{#1}{[m]}}
\nc{\F}{\mathcal{F}}
\nc{\M}{\mathcal{M}}
\let\oldproofname\proofname
\rnc{\proofname}{\rm\bf{\oldproofname}}
\rnc{\qedsymbol}{{\color{gray!50!black}\rule{0.6em}{0.6em}}}
\newcommand{\notts}{\affiliation{School of Mathematical Sciences, University of Nottingham,\\University Park, Nottingham NG7 2RD, United Kingdom}}
\begin{document}

\title{Non-asymptotic assisted distillation of quantum coherence}

\author{Bartosz Regula}
\email{bartosz.regula@gmail.com}
\notts

\author{Ludovico Lami}
\email{ludovico.lami@gmail.com}
\notts

\author{Alexander Streltsov}
\email{streltsov.physics@gmail.com}
\affiliation{Faculty of Applied Physics and Mathematics, \\Gda\'{n}sk University of Technology, 80-233 Gda\'{n}sk, Poland}
\affiliation{National Quantum Information Centre in Gda\'{n}sk, 81-824 Sopot, Poland}

\begin{abstract}
We characterize the operational task of environment-assisted distillation of quantum coherence under different sets of free operations when only a finite supply of copies of a given state is available. We first evaluate the one-shot assisted distillable coherence exactly, and introduce a semidefinite programming bound on it in terms of a smooth entropic quantity. We prove the bound to be tight for all systems in dimensions 2 and 3, which allows us to obtain computable expressions for the one-shot rate of distillation, establish an analytical expression for the best achievable fidelity of assisted distillation for any finite number of copies, and fully solve the problem of asymptotic zero-error assisted distillation for qubit and qutrit systems. Our characterization shows that all relevant sets of free operations in the resource theory of coherence have exactly the same power in the task of one-shot assisted coherence distillation, and furthermore resolves a conjecture regarding the additivity of coherence of assistance in dimension 3.
\end{abstract}

\maketitle


\section{Introduction}

The characterization of state transformations possible under restricted sets of operations is the cornerstone of every quantum resource theory. The fundamental task of resource distillation is concerned in particular with the transformation of arbitrary states into more resourceful states --- such a task has been studied extensively in the resource theory of entanglement, where the goal is to obtain maximally entangled states \cite{bennett_1996-1,bennett_1996-3,rains_1999,divincenzo_1999,smolin_2005,rains_1999,rains_2001,brandao_2011,
buscemi_2010,buscemi_2013,leditzky_2017,fang_2017,horodecki_2009}, and later considered also in general resource theories \cite{brandao_2015,lami_2018}. The particular case of the distillation of quantum coherence \cite{winter_2016,chitambar_2016-3,zhao_2018,regula_2017,fang_2018,zhao_2018-1,lami_2018-1} is of importance for several reasons: firstly, the resource theory of coherence characterizes fundamental quantum resources available in physical systems~\cite{aberg_2006,gour_2008,baumgratz_2014,streltsov_2017}, which makes its theoretical description necessary in order to efficiently manipulate and apply such resources for potential uses in quantum technological applications; and secondly, because of close relations between the resource theories of coherence and entanglement \cite{streltsov_2017}, a detailed characterization of the operational properties of quantum coherence can shed light on the properties of quantum entanglement, often more difficult to characterize explicitly.

The setting of \textit{assisted distillation} is based on a scenario in which the distillation of a resource is aided by another party, typically assumed to hold a purifying quantum state, whose assistance is limited to performing local measurements and communicating their results via a classical channel. Just as assisted entanglement distillation \cite{divincenzo_1999,smolin_2005} found a natural application in tasks involving environment-assisted communication and quantum error correction \cite{gregoratti_2003,smolin_2005,hayden_2005,winter_2007}, the assisted distillation of coherence can find use in any remote quantum information processing protocol where one aims to increase the coherence available to a spatially separated system without direct access to it \cite{chitambar_2016-3}. Although this setting has recently garnered both theoretical \cite{chitambar_2016-2,streltsov_2017-1,zhao_2017-1} as well as experimental \cite{wu_2017,wu_2017-1} attention, assisted coherence distillation has not been characterized in practical scenarios with physically relevant restrictions in mind.

In particular, the standard working assumption in quantum information theory is based on the idealized scenario in which one has access to an unbounded number of independent and identically distributed (i.i.d.) copies of a quantum system and can perform joint state manipulations of all the copies. Although this allows one to make fundamental statements about the general possibilities and limitations of a resource theory, it is not a realistic premise from an operational point of view. The tasks of entanglement and coherence distillation are often characterized under this assumption \cite{bennett_1996-1,bennett_1996-3,rains_1999,winter_2016}, meaning that the optimal theoretical rates can be very different from experimentally feasible protocols. To address these problems, a large array of tools in non-asymptotic quantum information theory has been established \cite{wang_2012,renes_2011,tomamichel_2013,berta_2011,leung_2015,datta_2013,anshu_2017-1}, finding use also in resource distillation \cite{brandao_2011,buscemi_2010,buscemi_2013,gour_2017,fang_2017,regula_2017,fang_2018,zhao_2018-1,lami_2018-1}. In the non-asymptotic setting, the characterization of resource distillation can be understood as the study of the trade-off between the achievable rate of distillation and the realistic restrictions on state transformations, including the number of accessible i.i.d.\ copies of a given state as well as the allowed error tolerance.

In this work, we investigate assisted coherence distillation in the non-asymptotic setting. Specifically, we evaluate the one-shot rate of assisted coherence distillation exactly, expressing it in terms of a convex roof--type quantity. We introduce an efficiently computable semidefinite programming (SDP) bound on the rate of distillation, showing in particular that the bound is tight for all qubit and qutrit systems. We establish a closed expression for the best achievable fidelity of distillation for any number of copies of a low-dimensional system, applicable in practical and experimental settings. As a corollary of our results, we also solve an open question regarding the additivity of coherence of assistance for qutrits raised in \cite{chitambar_2016-3}. The approach presented herein does not rely on the methods established for the characterization of similar tasks in the resource theory of entanglement \cite{buscemi_2013,brandao_2011,fang_2017}, and instead uses recently developed tools in the theory of one-shot coherence distillation \cite{regula_2017}.


\section{The setting of assisted distillation}

Consider a fixed orthonormal basis $\{\ket{i}\}$ in a finite-dimensional Hilbert space $\mathcal{H}_d$ of dimension $d$. Let $\DD$ denote the set of all density matrices. We will use $\Delta$ to denote the diagonal map (fully dephasing channel) in the basis $\{\ket{i}\}$, whose explicit action is given by $\Delta(\cdot) \coloneqq \sum_{i=1}^d \proj{i}(\cdot) \proj{i}$, and $\I \coloneqq \lset \omega \in \DD \bar \omega = \Delta(\omega) \rset$ to denote the set of incoherent (diagonal) states. The inner product $\<X,Y\>$ will be taken to be the Hilbert-Schmidt inner product $\Tr(X^\dagger Y)$. We will use the Dirac notation $\ket{x}$ to refer to vectors which are not necessarily normalized. Given a pure state $\ket\psi$, we will denote by $\psi$ the projection $\proj\psi$. The notation $\lnorm{\cdot}{p}$ will refer to the $\ell_p$ norm defined in the underlying Hilbert space, $\lnorm{\ket{x}}{p} \coloneqq (\sum_i |x_i|^p)^{1/p}$ with $\lnorm{\ket{x}}{\infty} = \max_i |x_i|$, while $\norm{\cdot}{p}$ will refer to the Schatten $p$-norm in the space of linear operators acting on $\mathcal{H}_d$, defined for a general matrix $M$ as $\norm{M}{p} \coloneqq \left(\mathrm{Tr}\left[\left(M^{\dagger}M\right)^{p/2}\right]\right)^{1/p}$
with $\norm{M}{\infty}$ being the largest singular value. Moreover, $F(\rho,\sigma) \coloneqq \norm{\!\sqrt{\rho}\!\sqrt{\vphantom{\rho}\sigma}}{1}^2$ will be used to denote the (squared) fidelity.

In the resource theory of quantum coherence, it does not seem possible to identify a unique set of free operations by means of physically motivated axioms \cite{chitambar_2016,streltsov_2017}, which makes it necessary to characterize operational tasks under several different classes of quantum channels. The largest possible set of such free operations are the \emph{maximally incoherent operations (MIO)} \cite{aberg_2006}, defined to be all quantum channels $\cE$ such that $\cE(\rho) \in \I$ for every $\rho \in \I$. A smaller set is given by the \emph{incoherent operations (IO)} \cite{baumgratz_2014}, which are all channels for which there exists a Kraus decomposition into incoherent Kraus operators, i.e. $\{K_\ell\}$ such that
$ K_\ell\rho K_\ell^{\dagger}/\Tr(K_\ell\rho K_\ell^{\dagger})\in\I$ for all $\ell$ and all $\rho\in \I$. These transformations can be interpreted as incoherent measurements which cannot create coherence even if postselection is applied to the individual measurement outcomes.
The \emph{dephasing-covariant incoherent operations (DIO)} \cite{chitambar_2016,marvian_2016} are maps $\cE$ which commute with the dephasing operation, i.e.\ $\Delta[\cE(\rho)] = \cE[\Delta(\rho)]$. The smallest of the sets that we consider are the \emph{strictly incoherent operations (SIO)} \cite{winter_2016,yadin_2016}, for which both $\{K_\ell\}$ and $\{K^\dagger_\ell\}$ are sets of incoherent operators.

Let us first consider the task of coherence distillation without assistance. Denoting by $\ket{\Psi_m}$ the $m$-dimensional maximally coherent state in the reference basis, $\ket{\Psi_m} = \sum_{i=1}^{m} \frac{1}{\sqrt{m}} \ket{i}$, the setting of one-shot coherence distillation under a class of operations $\O$ corresponds to characterizing the best achievable distillation rate
\begin{equation}
C_{d,\O}^{(1),\ve}(\rho) \coloneqq \log \max \lset m \in \mathbb N \bar F_\O(\rho,m) \geq 1- \ve \rset,
\end{equation}
where we allow a finite distillation error $\ve$, as quantified by the so-called fidelity of distillation
\begin{align}
       F_{\O}(\rho,m) \coloneqq \max_{\Lambda \in \O} \< \Lambda(\rho), \;\Psi_m \>.
\end{align}
The asymptotic distillable coherence is then obtained by considering an infinite supply of i.i.d.\ copies of the given quantum system, while requiring the distillation error to vanish asymptotically:
\begin{align}
  C^{\infty}_{d,\O}(\rho) \coloneqq \lim_{\ve \to 0} \lim_{n \to \infty} \frac{1}{n} C_{d,\O}^{(1),\ve}(\rho^{\otimes n}).
\end{align}
It has been shown that all sets of operations $\O \in \{\MIO, \DIO, \IO\}$ give rise to the same asymptotic rate of distillation, $C^\infty_{d,\O}(\rho) = S(\Delta(\rho)) - S(\rho)$ \cite{winter_2016,zhao_2018,regula_2017,chitambar_2018}, while SIO are significantly weaker and exhibit a generic phenomenon of bound (undistillable) coherence \cite{zhao_2018-1,lami_2018-1}.

The setting of assisted distillation exhibits fundamental qualitative and quantitative differences from the unassisted case \cite{gour_2006}. In the protocol of distillation with assistance, we consider two parties (Alice and Bob) who share a pure quantum state $\ket{\psi_{AB}}$, and Alice's task is to assist Bob in distilling coherence from his part of the shared system by performing local measurements on her part of the system and communicating the results to Bob. Since the set of such measurements (which can, without loss of generality, assumed to be rank one \cite{buscemi_2013}) is in a one-to-one correspondence with the set of convex decompositions of Bob's system $\rho_B \coloneqq \Tr_A \proj{\psi_{AB}}$ \cite{schrodinger_1936,hughston_1993}, this effectively means that Alice's role in the protocol is to allow Bob to access any pure-state decomposition of $\rho_B$. Therefore, the best achievable assisted distillation rate under a class of operations $\O$ on Bob's system can be expressed as
\begin{equation}\begin{aligned}
 C_{A,\O}^{(1),\ve}(\rho_B) \coloneqq \log \max \left\{ m \in \mathbb{N} \;\left|\; F_{A,\O}\left( \rho_B, m \right) \geq 1 - \ve \right.\right\}
\end{aligned}\end{equation}
with the figure of merit being the average fidelity of distillation~\cite{buscemi_2013}, defined as
\begin{equation}\begin{aligned}
	F_{A,\O}\left( \rho_B , m \right) &\coloneqq \max \lset \< \sum_i p_i\, \Lambda_i \left(\psi_i\right),\Psi_m \> \right.\right.\\
	&\qquad\quad\quad \left|\, \rho_B = \sum_i p_i \psi_i,\; \Lambda_i \in \O \; \forall i \vphantom{\< \sum_i p_i\, \Lambda_i \left(\psi_i\right),\Psi_m \>}\right\}
\end{aligned}\end{equation}
where the maximization is over all pure-state decompositions of $\rho_B$. The definition of $F_{A,\O}$ is motivated by the fact that, after Alice announces her measurement result to Bob, he knows exactly which of the states $\{\ket{\psi_i}\}$ he is in possession of, and can apply an appropriate operation $\Lambda_i$ best suited for the particular state; since the measurement outcome itself behaves probabilistically, $F_{A,\O}$ is then defined to characterize the best fidelity of assisted distillation achievable on average.

Employing results from the theory of entanglement distillation \cite{smolin_2005}, it has been shown that the asymptotic rate of assisted coherence distillation under incoherent operations is given by the regularized coherence of assistance $C_a$ \cite{chitambar_2016-3}, that is,
\begin{equation}\begin{aligned}
	C_{A,\IO}^{\infty}(\rho_B) \coloneqq \lim_{\ve \to 0} \lim_{n \to \infty} \frac{1}{n} C_{A,\IO}^{(1),\ve}(\rho^{\otimes n}) = \lim_{n \to \infty} \frac{1}{n} C_a (\rho_B^{\otimes n})
\end{aligned}\end{equation}
where
\begin{equation}\begin{aligned}
	C_a(\rho_B) \coloneqq \max \lset \sum_i p_i  \,S\left(\Delta(\psi_i)\right) \bar \rho_B = \sum_i p_i \psi_i \rset
\end{aligned} \label{Ca} \end{equation}
with the optimization performed over all pure-state decompositions of $\rho_B$. Surprisingly, it turns out that the above regularized coherence of assistance admits a closed single-letter formula, given by the entropy of the diagonal part of the state in consideration:
\begin{equation}
\lim_{n \to \infty} \frac{1}{n} C_a (\rho_B^{\otimes n}) = S\left(\Delta(\rho_B)\right) .
\end{equation}

We stress that the coherence of assistance~\eqref{Ca} acquires an operational meaning in the task of assisted coherence distillation only in this asymptotic setting --- although several authors have referred to $C_a$ as the one-shot equivalent of $C_{A,\IO}^{\infty}$, this is not motivated operationally, and the best achievable rate in the non-asymptotic regime has not been characterized thus far.

In the following, we will omit the subscript $B$ and write $\rho$ for Bob's system, working under the assumption that Alice holds a purifying system.


\section{Fidelity and rate of distillation}

Notice first that the average fidelity of assisted distillation can be equivalently given as
\begin{equation}\begin{aligned}\label{eq:fidelity_asst}
	F_{A,\O}\left( \rho , m\right)= & \max \lset \sum_i p_i \max_{\Lambda_i \in \O} \< \Lambda_i({\psi_i}), {\Psi_m} \> \bar \rho = \sum_i p_i \psi_i \rset
	\\= &\max \lset \sum_i p_i  \,F_\O \left({\psi_i}, m\right) \bar \rho = \sum_i p_i \psi_i \rset
\end{aligned}\end{equation}
since the diagonal unitary operation adjusting the phases of the given pure state $\ket{\psi_i}$ is always a free operation.

It has recently been shown that the fidelity of distillation of a pure state $F_\O({\psi},m)$ is the same for any set of operations $\O \in \{\MIO, \DIO, \IO, \SIO\}$, and admits an explicit formula as~\cite{regula_2017}
\begin{equation}\begin{aligned}
	F_\O (\psi, m) = \frac1m \mnorm{\ket\psi}^2,
\end{aligned}\end{equation}
with $\mnorm{\ket\psi} \coloneqq \min_{\ket{x}} \lnorm{\ket{\psi} - \ket{x}}{1} + \sqrt{m} \lnorm{\ket{x}}{2}$ being the so-called $m$-distillation norm. Since we can already see from Eq.~\eqref{eq:fidelity_asst} that the fidelity of assisted distillation of a mixed state depends only on the corresponding pure-state distillation fidelities, this immediately implies that the one-shot rate of assisted coherence distillation of any state will be the same under any of the sets $\O \in \{\MIO, \DIO, \IO, \SIO\}$. This stands in a sharp contrast to the case of unassisted distillation, where, although the one-shot rates of distillation under MIO, DIO, and IO are all approximately equal \cite{regula_2017,zhao_2018-1}, the set of operations SIO is much weaker and unable to distill any coherence from most mixed states, even asymptotically \cite{lami_2018-1}.

Now, although $\mnorm{\ket\psi}$ is in fact exactly computable for any pure state, here we will not make use of this exact expression, and instead use its dual representation \cite{regula_2017}:
\begin{equation}\begin{aligned}
	\mnorm{\ket\psi} = \max \lset \cbraket{\psi|\omega} \bar \lnorm{\ket{\omega}}{\infty} \leq 1,\; \lnorm{\ket\omega}{2} = \sqrt{m} \rset
\end{aligned}\end{equation}
with the particular cases $\norm{\ket\psi}{[1]} = \lnorm{\ket\psi}{2}$, $\norm{\ket\psi}{[d]} = \lnorm{\ket\psi}{1}$. Notice that now $m$ can be considered as a continuous parameter, and we will hereafter treat it as such. Consider then a family of sets of density matrices defined by
\begin{equation}\begin{aligned}
	\M_m \coloneqq& \conv \left\{ \proj{\omega} \;\left|\; \lnorm{\ket{\omega}}{\infty} \leq \frac{1}{\sqrt{m}},\; \lnorm{\ket{\omega}}{2} = 1 \right.\right\}\\
\end{aligned}\end{equation}
which allows us to equivalently write
\begin{equation}\begin{aligned}
	F_\O (\psi, m) = \max_{\omega \in \M_m} \< {\psi}, \omega \> = \max_{\omega \in \M_m} \,F(\psi, \omega).
\end{aligned}\end{equation}
The result of \cite{streltsov_2010} tells us that, since $\M_m$ is a convex hull of pure states, we have the following.
\begin{lemma}$\,$\vspace*{-.8\baselineskip}
\begin{equation}\begin{aligned}
	F_{A,\O}\left( \rho , m\right) &= \max \lset \sum_i p_i  \max_{\omega \in \M_m} \,F({\psi_i}, \omega) \bar \rho = \sum_i p_i \psi_i \rset \\
	&= \max_{\omega \in \M_m} \,F(\rho, \omega).
\end{aligned}
\label{Fa simplified}\end{equation}
\end{lemma}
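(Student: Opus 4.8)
The first equality is immediate: insert the already-established pure-state identity $F_\O(\psi,m)=\max_{\omega\in\M_m}F(\psi,\omega)$ into~\eqref{eq:fidelity_asst}, term by term. All the content sits in the second equality, which states that the fidelity ``distance'' $\rho\mapsto\max_{\omega\in\M_m}F(\rho,\omega)$ to the convex set $\M_m$ coincides with the concave roof of its restriction to pure states --- precisely the specialization to $M=\M_m$ of the general principle of Ref.~\cite{streltsov_2010}. Writing $f(\psi)\coloneqq\max_{\omega\in\M_m}F(\psi,\omega)=\max_{\omega\in\M_m}\<\psi,\omega\>$ and $\hat f(\rho)\coloneqq\max\lset\sum_i p_i f(\psi_i)\bar\rho=\sum_i p_i\psi_i\rset$ for the concave roof, the plan is to prove the two bounds $\hat f(\rho)\le\max_{\omega\in\M_m}F(\rho,\omega)$ and $\max_{\omega\in\M_m}F(\rho,\omega)\le\hat f(\rho)$ separately, reducing each to Uhlmann's theorem together with a single Cauchy--Schwarz estimate.

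For $\hat f(\rho)\le\max_{\omega\in\M_m}F(\rho,\omega)$, fix an optimal decomposition $\rho=\sum_i p_i\psi_i$ and, for each $i$, a unit vector $\ket{\eta_i}$ with $\lnorm{\ket{\eta_i}}{\infty}\le 1/\sqrt m$, chosen (up to a global phase) so that $c_i\coloneqq\langle\psi_i|\eta_i\rangle\ge 0$ with $c_i^2=f(\psi_i)$. The naive candidate $\omega=\sum_i p_i\proj{\eta_i}\in\M_m$ is too weak: joint concavity of $\sqrt F$ only bounds $F(\rho,\omega)$ below by $\big(\sum_i p_i\sqrt{f(\psi_i)}\big)^2$, the wrong side of Jensen. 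Instead I would take $\omega\coloneqq\sum_i r_i\proj{\eta_i}\in\M_m$ with the reweighting $r_i\propto p_i c_i^2$. Using $F(\rho,\sigma)\ge|\langle\Phi_\rho|\Phi_\sigma\rangle|^2$ for arbitrary purifications $\ket{\Phi_\rho},\ket{\Phi_\sigma}$ of $\rho,\sigma$ (one half of Uhlmann's theorem) with $\ket{\Phi_\rho}=\sum_i\sqrt{p_i}\ket{\psi_i}\ket i$ and $\ket{\Phi_\omega}=\sum_i\sqrt{r_i}\ket{\eta_i}\ket i$, one gets $F(\rho,\omega)\ge\big(\sum_i\sqrt{p_i r_i}\,c_i\big)^2$; for the stated weights the Cauchy--Schwarz bound $\big(\sum_i\sqrt{p_i r_i}\,c_i\big)^2\le\big(\sum_i r_i\big)\big(\sum_i p_i c_i^2\big)$ is saturated, so $F(\rho,\omega)\ge\sum_i p_i c_i^2=\sum_i p_i f(\psi_i)=\hat f(\rho)$.

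For the reverse bound, let $\omega^\star\in\M_m$ attain $\max_{\omega\in\M_m}F(\rho,\omega)$ and fix a decomposition $\omega^\star=\sum_k r_k\proj{\eta_k}$ into generators of $\M_m$ (so $\lnorm{\ket{\eta_k}}{\infty}\le 1/\sqrt m$; padding with additional generators if needed, we may take $\{\ket k\}$ to be a full basis of the purifying space). The key step is to run Uhlmann's theorem with the \emph{particular} purification $\ket{\Phi_{\omega^\star}}\coloneqq\sum_k\sqrt{r_k}\ket{\eta_k}\ket k$ of $\omega^\star$ held fixed: $F(\rho,\omega^\star)=\max_U\big|\langle\Phi_\rho|(\id\otimes U)|\Phi_{\omega^\star}\rangle\big|^2$ over purifying-space unitaries $U$, for any fixed purification $\ket{\Phi_\rho}$ of $\rho$. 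Expanding at the optimal $U$ and regrouping the sum yields subnormalized vectors $\ket{\xi_k}$ with $\sum_k\proj{\xi_k}=\rho$ (using $\sum_k U\proj k U^\dagger=\id$) and $F(\rho,\omega^\star)=\big|\sum_k\sqrt{r_k}\,\langle\xi_k|\eta_k\rangle\big|^2$. Putting $p_k\coloneqq\langle\xi_k|\xi_k\rangle$ and $\psi_k\coloneqq\proj{\xi_k}/p_k$, a single Cauchy--Schwarz step (using $\sum_k r_k=1$) gives $F(\rho,\omega^\star)\le\sum_k p_k|\langle\psi_k|\eta_k\rangle|^2\le\sum_k p_k f(\psi_k)$, the last step because $\proj{\eta_k}\in\M_m$; since $\{p_k,\psi_k\}$ is a decomposition of $\rho$, this yields $\max_{\omega\in\M_m}F(\rho,\omega)\le\hat f(\rho)$.

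The main obstacle is exactly this last argument: a generic purification of $\omega^\star$ carries no memory of the pure-state structure of $\M_m$, so one must build the purification out of the generating vectors $\ket{\eta_k}$ (and pad only with further generators) \emph{before} applying the Uhlmann rotation, and then verify that the regrouped $\ket{\xi_k}$ genuinely reassemble into $\rho$. Everything else --- the phase alignment, the reweighting, and the Cauchy--Schwarz manipulations --- is routine, and one could alternatively just invoke Ref.~\cite{streltsov_2010} directly.
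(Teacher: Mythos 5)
Your argument is correct and follows essentially the same route as the paper's own proof: both rest on Uhlmann's theorem applied with a fixed purification of $\omega$ built from its decomposition into the pure generators of $\M_m$, the expansion of the purification of $\rho$ in the purifying basis to produce pure-state decompositions $\{p_i,\psi_i\}$ of $\rho$, and a Cauchy--Schwarz optimization over the weights (your reweighting $r_i\propto p_i c_i^2$ is exactly the saturating choice implicit in the paper's chain of equalities). The only difference is organizational --- you split the identity into two inequalities and make the Uhlmann unitary and the regrouped vectors $\ket{\xi_k}$ explicit, whereas the paper writes a single chain of equalities --- so there is nothing substantive to add.
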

That is, to compute the average fidelity of assisted distillation it suffices to maximize the fidelity of the state $\rho$ over the set $\M_m$, and the optimization over pure-state decompositions of $\rho$ is not necessary. Since the derivation of this fact in Ref. \cite{streltsov_2010} is missing a minor step, we include a brief justification of~\eqref{Fa simplified} below for completeness.
\begin{proof}
By Uhlmann's theorem \cite{uhlmann_1976}, the right-hand side can be expressed as 
\begin{equation}
\max_{\omega \in \M_m} \,F(\rho, \omega) = \max_{\omega\in \M_m} \max_{\Psi_\omega, \Psi_\rho} \cbraket{\Psi_\rho | \Psi_\omega}^2 ,
\end{equation}
where the internal maximization on the right-hand side is over all purifications of $\rho$ and $\omega$. Furthermore, we can restrict ourselves to a \emph{fixed} purification of $\omega$ \cite{jozsa_1994}. If $\omega=\sum_i q_i \eta_i$ is a decomposition of $\omega$ into pure states $\eta_i = \proj{\eta_i}\in \M_m$, we choose $\ket{\Psi_\omega} \coloneqq \sum_i \sqrt{q_i} \ket{\eta_i}\ket{i}$. Any purification of $\rho$ over the same system can then be expanded as $\ket{\Psi_\rho} = \sum_i \sqrt{p_i} \ket{\psi_i}\ket{i}$ with respect to the orthonormal basis $\{\ket{i}\}$ on the purifying system. Here, $\{p_i, \psi_i\}$ forms a pure-state decomposition of $\rho$. Since $\braket{\Psi_\omega | \Psi_\rho } = \sum_i \sqrt{p_i q_i} \braket{\eta_i | \psi_i}$, we deduce that
\begin{align}
\max_{\omega \in \M_m} \,F(\rho, \omega) &= \max \lset \left| \sum_i \sqrt{p_i q_i} \braket{\eta_i | \psi_i} \right|^2\right.\right.\nonumber\\
&\qquad\left|\, \rho = \sum_i p_i \psi_i,\; \omega = \sum_i q_i \eta_i,\; \eta_i \in \M_m \vphantom{\left| \sum_i \sqrt{p_i q_i} \braket{\eta_i | \psi_i} \right|^2}\right\}\nonumber\\
&\hspace{-5em}= \max \lset \sum_i p_i \cbraket{\eta_i | \psi_i}^2 \bar \rho = \sum_i p_i \psi_i,\; \eta_i \in \M_m \rset\raisetag{1.5\baselineskip}\\
&\hspace{-5em}= \max \lset \sum_i p_i \max_{\omega \in \M_m} F({\psi_i}, \omega) \bar \rho = \sum_i p_i \psi_i \rset.\tag*{\qedhere}
\end{align}
\end{proof}


We can then use the above results to express the one-shot assisted distillation rate as
\begin{equation}\begin{aligned}\label{eq:rate1}
	C_{A,\O}^{(1),\ve}(\rho) &= \log \max \lset m \in \NN \bar F_{A,\O} (\rho, m) \geq 1 - \ve \rset\\
	&=  \log \max \lsetr m \in \NN \barr F(\rho, \omega) \geq 1 - \ve,\; \omega \in \M_m \rsetr.
\end{aligned}\end{equation}
Consider now a function $\vartheta$ such that, for any $m \geq 1$ and any $\omega \in \DD$, we have $\omega \in \M_m \iff \vartheta(\omega) \leq \frac{1}{m}$. Based on the definition of $\M_m$, a simple choice of such a function can be made as
\begin{equation}\begin{aligned}\label{eq:buscemi}
  \vartheta(\omega) \coloneqq \min \lset \max_i \lnorm{\ket{\psi_i}}{\infty}^2 \bar \omega = \sum_i p_i \psi_i \rset.
\end{aligned}\end{equation}
Letting $\B_\ve(\rho) \coloneqq \lset \omega \in \DD \bar F(\rho, \omega) \geq 1 - \ve \rset$ be the $\ve$-ball of $\rho$ in purified distance and introducing the shorthand $\logfloor{x} \coloneqq \log \left\lfloor 2^x \right\rfloor$,
we can rewrite Eq.~\eqref{eq:rate1} as
\begin{equation}\begin{aligned}
C_{A,\O}^{(1),\ve}(\rho) 
	&= \logfloor{ - \log \min_{\omega \in \B_\ve(\rho)} \lset k \in \RR  \bar \vartheta(\omega) \leq k \rset }.
\end{aligned}\end{equation}
Putting all of our considerations together, we have the following result, which characterizes the one-shot distillation of coherence completely.
\begin{theorem}
For any state $\rho \in \DD$ and any class of operations $\O \in \{\MIO,\DIO,\SIO,\IO\}$, the maximal achievable fidelity of assisted distillation as well as the rate of one-shot assisted distillation are given as
\begin{equation}\begin{aligned}
F_{A,\O}\left( \rho , m\right) &= \max_{\omega \in \M_m} \,F(\rho, \omega),\\
		C_{A,\O}^{(1),\ve}(\rho) &= \logfloor{ - \log \min_{\omega \in \B_\ve(\rho)} \vartheta(\omega) }.
\end{aligned}\end{equation}
\end{theorem}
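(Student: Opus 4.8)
The plan is to recognize that both identities have, in effect, already been assembled in the preceding discussion, so that the proof reduces to chaining together the equalities established above and dispatching a couple of routine technical points. The first identity, $F_{A,\O}(\rho,m)=\max_{\omega\in\M_m}F(\rho,\omega)$, is precisely the statement of the Lemma proved above via Uhlmann's theorem and the fixed-purification argument; it holds uniformly over $\O\in\{\MIO,\DIO,\SIO,\IO\}$ because, as the reduction in Eq.~\eqref{eq:fidelity_asst} shows, $F_{A,\O}$ depends only on the pure-state distillation fidelities $F_\O(\psi,m)=\tfrac1m\mnorm{\ket\psi}^2$, and these coincide for all four classes. Hence nothing further is needed for the first equation.

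For the rate I would start from the definition $C_{A,\O}^{(1),\ve}(\rho)=\log\max\lset m\in\NN\bar F_{A,\O}(\rho,m)\geq 1-\ve\rset$ and substitute the first identity, turning the feasibility condition on an integer $m$ into the existence of $\omega\in\M_m$ with $F(\rho,\omega)\geq 1-\ve$, i.e.\ into $\M_m\cap\B_\ve(\rho)\neq\emptyset$. Invoking the defining property of $\vartheta$, namely $\omega\in\M_m\iff\vartheta(\omega)\leq 1/m$ for every $m\geq 1$, this is equivalent to the existence of $\omega\in\B_\ve(\rho)$ with $\vartheta(\omega)\leq 1/m$, hence to $t\leq 1/m$, where $t\coloneqq\min_{\omega\in\B_\ve(\rho)}\vartheta(\omega)$. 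Since feasibility is monotone in $m$, the feasible integers are exactly $\{1,\dots,\lfloor 1/t\rfloor\}$, so $C_{A,\O}^{(1),\ve}(\rho)=\log\lfloor 1/t\rfloor=\log\lfloor 2^{-\log t}\rfloor$; recalling the shorthand $\logfloor{x}=\log\lfloor 2^x\rfloor$ then yields $C_{A,\O}^{(1),\ve}(\rho)=\logfloor{-\log t}$, the claimed formula.

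The only points deserving a line of justification --- and thus the only real candidates for an ``obstacle'', albeit a mild one --- are that the minimum defining $t$ is attained and that relaxing $m$ to a continuous parameter inside $\M_m$ is benign. For the former I would note that $\B_\ve(\rho)$ is compact (closed since $F(\rho,\cdot)$ is continuous, and bounded as a subset of $\DD$) and that $\vartheta$ is lower semicontinuous, which a direct compactness argument on convergent sequences of decompositions --- whose length may be capped via Carathéodory --- establishes; a lower semicontinuous function on a compact set attains its infimum. For the latter, the dual representation $\mnorm{\ket\psi}=\max\lset\cbraket{\psi|\omega}\bar\lnorm{\ket\omega}{\infty}\leq 1,\ \lnorm{\ket\omega}{2}=\sqrt m\rset$ exhibits $F_\O(\psi,m)$, and hence $\max_{\omega\in\M_m}F(\rho,\omega)$, as continuous and monotone in the real parameter $m$, so the supremum over feasible integers is genuinely the floor of the supremum over feasible reals. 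With these in place, both displayed formulas follow directly from the chain above.
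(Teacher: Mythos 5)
Your proof is correct and follows essentially the same route as the paper, which likewise obtains the theorem by combining the Lemma (via Uhlmann's theorem) with the defining property $\omega\in\M_m\iff\vartheta(\omega)\leq 1/m$ and the monotone structure of the feasible set of integers $m$. The extra care you take over attainment of the minimum and the benignity of treating $m$ as continuous only fills in routine points the paper leaves implicit.
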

The difficulty in evaluating $\vartheta$, however, means that one cannot expect $C_{A,\O}^{(1),\ve}(\rho)$ to be computable in general cases, and prompts our investigation of appropriate relaxations.

Note that a quantifier equivalent to $\vartheta$ was used to bound the one-shot assisted distillable entanglement by Buscemi and Datta in Ref. \cite{buscemi_2013}, derived there using a complementary set of methods.


\section{SDP relaxation}

Consider a relaxation of the set $\M_m$ defined as follows:
\begin{equation}\begin{aligned}
		\MM_m &\coloneqq \left\{ \omega \in \DD \;\left|\;  \norm{\Delta(\omega)}{\infty} \leq \frac{1}{m} \right.\right\}.
\end{aligned}\end{equation}
The inclusion $\M_m \subseteq \MM_m$ is then obvious, since any $\proj{\psi} \in \M_m$ is contained in $\MM_m$, and hence also all convex combinations of such rank-one terms. Although it might be tempting to conjecture that $\M_m = \MM_m$, this can be shown to be true only in dimension $d \leq 3$, and in general the inclusion can be strict.
\begin{theorem}\label{thm:correlation_matrices}
In dimension $d \in \{2,3\}$ it holds that $\M_m = \MM_m$ for all $m$, but for any $d\geq 4$ there exist $m$ s.t. $\M_m \subsetneqq \MM_m$, and in particular $\M_d \subsetneqq \MM_d$.
\end{theorem}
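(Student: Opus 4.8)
The plan is to study the extreme points of the compact convex set $\MM_m$ and invoke the Krein--Milman (Minkowski) theorem. Since the inclusion $\M_m \subseteq \MM_m$ is already established, the equality $\M_m = \MM_m$ for $d\le 3$ will follow once we show that every extreme point of $\MM_m$ has rank one, while the strict inclusion for $d\ge 4$ will follow from exhibiting an extreme point of $\MM_d$ of rank two.

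First I would isolate the extreme-point criterion. For $\omega\in\MM_m$ let $r\coloneqq\rank\omega$ and let $S\coloneqq\{i: \omega_{ii}=1/m\}$ be the active set. A standard perturbation argument shows that $\omega$ is an extreme point of $\MM_m$ if and only if there is no nonzero Hermitian $H$ with $\supp H\subseteq\supp\omega$, $\Tr H=0$, and $H_{ii}=0$ for all $i\in S$: for small $\ve>0$ the operators $\omega\pm\ve H$ stay positive (because $\supp H\subseteq\supp\omega$), stay normalized (because $\Tr H=0$), and satisfy the diagonal bound (the inactive constraints have slack, the active ones are frozen by $H_{ii}=0$). Now the Hermitian operators supported on $\supp\omega$ form a real vector space of dimension $r^2$, whereas the functionals $H\mapsto\Tr H$ and $H\mapsto H_{ii}$, $i\in S$, all lie in the span of the $d$ diagonal-entry functionals, hence cut out a subspace of dimension at least $r^2-d$. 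So an extreme point necessarily obeys $r^2\le d$, which for $d\in\{2,3\}$ forces $r=1$. A rank-one element of $\MM_m$ is $\proj\psi$ with $\lnorm{\ket\psi}{2}=1$ and $\lnorm{\ket\psi}{\infty}^2=\max_i\omega_{ii}\le 1/m$, i.e.\ precisely one of the pure states generating $\M_m$; hence $\operatorname{ext}\MM_m\subseteq\M_m$ and $\MM_m=\conv\operatorname{ext}\MM_m\subseteq\M_m$, proving $\M_m=\MM_m$ in dimensions $2$ and $3$.

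For $d\ge 4$ I would build a rank-two extreme point of $\MM_d$, handling $d=4$ first. Here it is convenient to identify $\MM_4$ with $\{\tfrac14 C : C \text{ a } 4\times 4 \text{ complex correlation matrix}\}$, under which the pure-state generators of $\M_4$ become the rank-one $C$ with unimodular entries. Take unit vectors $w_1,\dots,w_4\in\CC^2$ whose Bloch vectors $\vec n_1,\dots,\vec n_4\in\RR^3$ are \emph{not} coplanar (for instance the four vertices of a regular tetrahedron on the Bloch sphere, equivalently a qubit SIC ensemble), let $C=(\langle w_i\vert w_j\rangle)_{ij}$ be their Gram matrix, and set $\omega_4\coloneqq\tfrac14 C$; then $C$ is positive semidefinite with unit diagonal and rank $2$, so $\omega_4\in\MM_4$ with $r=2$. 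Writing $W$ for the matrix with columns $w_i$, every Hermitian $H$ with $\supp H\subseteq\supp\omega_4=\supp(W^\dagger W)$ is $H=W^\dagger G W$ for a unique Hermitian $G=c_0\id+\vec c\cdot\vec\sigma$ on $\CC^2$, and then $H_{ii}=\langle w_i\vert G\vert w_i\rangle=c_0+\vec c\cdot\vec n_i$; these vanish for all $i$ exactly when the matrix with rows $(1,\vec n_i)$ annihilates $(c_0,\vec c)$, and non-coplanarity of the $\vec n_i$ makes that $4\times4$ matrix invertible, forcing $H=0$. By the criterion above $\omega_4$ is an extreme point of $\MM_4$, and since by Milman's theorem the extreme points of $\M_4=\conv\{\text{flat pure states of }\CC^4\}$ are flat pure states (rank one), the rank-two $\omega_4$ cannot lie in $\M_4$; thus $\M_4\subsetneq\MM_4$.

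Finally I would lift this to arbitrary $d\ge 4$ via the block-diagonal state $\omega_d\coloneqq\tfrac4d\,\omega_4\oplus\tfrac1d\,\id_{d-4}\in\MM_d$. If $\omega_d=\sum_k\lambda_k\proj{\psi_k}$ with each $\ket{\psi_k}$ flat in $\CC^d$, then compressing to the first four coordinates gives $\tfrac4d\,\omega_4=\sum_k\lambda_k\proj{P\psi_k}$, where the first block $\ket{P\psi_k}$ has squared norm $4/d$ and all four moduli equal; renormalizing $\ket{\chi_k}\coloneqq\tfrac{\sqrt d}{2}\ket{P\psi_k}$ yields flat pure states of $\CC^4$ with $\omega_4=\sum_k\lambda_k\proj{\chi_k}\in\M_4$, a contradiction. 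Hence $\omega_d\in\MM_d\setminus\M_d$, and in particular $\M_d\subsetneq\MM_d$. I expect the $d\ge 4$ construction to be the crux: the routine part is the dimension count, while the delicate points are stating the extreme-point criterion correctly and certifying that the candidate $\omega_4$ is genuinely extreme via the non-coplanarity/invertibility argument.
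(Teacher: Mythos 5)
Your proof is correct, but it takes a genuinely different route from the paper. The paper handles both directions by reduction to the classical theory of correlation matrices: for $d\le 3$ it rescales an arbitrary $\rho\in\MM_m$ by the congruence $X=\Delta(\rho)^{-1/2}\rho\,\Delta(\rho)^{-1/2}$ to get a correlation matrix and then invokes the known result (Christensen--Vesterstr\o m, Grone \emph{et al.}, Li--Tam) that correlation matrices in dimension $\le 3$ decompose into rank-one matrices with unit diagonal, while for $d\ge 4$ it simply cites the existence of extreme correlation matrices of rank $\ge 2$. You instead reprove these facts from scratch: a perturbation/dimension-count argument gives the bound $r^2\le d$ on the rank of extreme points of $\MM_m$ directly for every $m$ (no rescaling needed), so Minkowski/Krein--Milman yields $\MM_m\subseteq\M_m$ for $d\le 3$; and for $d\ge 4$ you exhibit an explicit rank-two extreme point of $\MM_4$, namely one quarter of the Gram matrix of a qubit tetrahedron (SIC) ensemble, certified extreme via the parametrization $H=W^\dagger GW$ and the affine independence of the Bloch vectors, then lifted to all $d\ge 4$ by a direct sum and a compression argument. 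Both the extremality criterion and the counting are sound (the active-constraint functionals lie in the span of the $d$ diagonal functionals), and the step from ``$\omega_4$ extreme in $\MM_4$ and rank two'' to ``$\omega_4\notin\M_4$'' is fine since an extreme point of $\MM_4$ contained in the convex subset $\M_4$ would have to coincide with one of the (flat, rank-one) generators. What your approach buys is self-containedness and an explicit counterexample; what the paper's approach buys, besides brevity, is that its $d\le 3$ argument produces a decomposition into pure states with $\Delta(\psi_i)=\Delta(\rho)$, i.e.\ Corollary~3, which is reused in Theorem~4 and Proposition~5 --- your extreme-point decomposition only guarantees $\lnorm{\ket{\psi_i}}{\infty}\le 1/\sqrt{m}$, so it proves the present theorem but would not directly yield that stronger corollary.
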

\begin{proof}
A particular simplification occurs in the case $m=d$, where we get $\MM_d = \lsetr X \geq 0 \barr X_{ii} = \frac{1}{d} \; \forall i\rsetr$.
This set (up to a multiplicative factor of $\frac1d$) corresponds to the set of so-called correlation matrices. In particular, it is known that the extremal points of the set $\MM_d$ are given by rank-one matrices only in the case of $d \in \{2,3\}$, and for any $d\geq 4$, there exist extremal points of $\MM_d$ of rank at least 2 \cite{christensen_1979,grone_1990,li_1994}. Since no such extremal rank-2 matrix can be written as a convex combination of rank-one matrices in $\MM_d$, we have $\MM_d \neq \M_d$ when $d \geq 4$.

Consider now the case of $d \in \{2, 3\}$ and take $\rho \in \MM_m$ for any $m \geq 1$. We will assume without loss of generality that $\Delta(\rho)>0$, since otherwise one can apply an inconsequential permutation of the basis vectors to write $\rho = \rho' \oplus 0$, reducing the problem to the lower-dimensional case. Define $X \coloneqq \Delta(\rho)^{-1/2} \,\rho\, \Delta(\rho)^{-1/2}$ so that $\Delta(X) = \id$. Since $X$ is a correlation matrix, by the result of \cite{christensen_1979} discussed above it admits a convex pure-state convex decomposition as $X = \sum_i p_i \xi'_i$ with each $\Delta(\xi'_i) = \id$. Defining $\ket{\xi_i} \coloneqq \Delta(\rho)^{1/2} \ket{\xi'_i}$ we get $\rho = \sum_i p_i \xi_i$ with $\Delta(\xi_i) = \Delta(\rho)$  for all $i$, which means in particular that $\rho$ admits a rank-one convex decomposition in $\MM_m$. This implies that $\MM_m \subseteq \M_m$, concluding the proof that $\MM_m = \M_m$ in dimension 2 and 3. \end{proof}
Notice that the proof of the above Theorem also shows a general property of quantum states:
\begin{corollary}\label{corr:same_diagonal}
Every $\rho \in \DD$ in dimension $d \leq 3$ admits a convex decomposition into pure states as $\rho = \sum_i p_i \psi_i$ such that $\Delta(\psi_i) = \Delta(\rho)$ for all $i$.
\end{corollary}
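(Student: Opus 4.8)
The plan is to obtain the statement as a by-product of the argument already used in the proof of Theorem~\ref{thm:correlation_matrices}, noting that that argument exploits only the dimension constraint $d\le 3$ and is entirely oblivious to the bound $\norm{\Delta(\omega)}{\infty}\le 1/m$ defining $\MM_m$.

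First I would dispose of the degenerate case. If $\Delta(\rho)$ is not invertible, then some diagonal entry $\rho_{kk}$ vanishes, and positivity of $\rho$ forces the whole $k$-th row and column to vanish as well; permuting the basis so that all such indices come last --- a relabelling that commutes with $\Delta$ --- we may write $\rho=\rho'\oplus 0$ with $\rho'$ living in dimension $\le 3$ and $\Delta(\rho')>0$. Any decomposition of $\rho'$ with the required property, padded by zeros, then yields one for $\rho$, so it suffices to treat the case $\Delta(\rho)>0$.

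Assuming $\Delta(\rho)>0$, I would pass to $X\coloneqq \Delta(\rho)^{-1/2}\,\rho\,\Delta(\rho)^{-1/2}$. This operator is positive semidefinite and satisfies $\Delta(X)=\id$, i.e.\ it is (a scalar multiple of) a correlation matrix in dimension $d\le 3$. At this point I would invoke the classical structure result --- the same one used above, from Ref.~\cite{christensen_1979} --- that for $d\in\{2,3\}$ every correlation matrix is a convex combination of rank-one correlation matrices, $X=\sum_i p_i\proj{\xi'_i}$ with $\Delta(\proj{\xi'_i})=\id$ for every $i$.

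Finally I would undo the congruence: setting $\ket{\xi_i}\coloneqq \Delta(\rho)^{1/2}\ket{\xi'_i}$, the identity $\Delta(\proj{\xi'_i})=\id$ gives $\cbraket{j|\xi_i}^2=\Delta(\rho)_{jj}$ for all $j$, hence $\Delta(\proj{\xi_i})=\Delta(\rho)$ --- and in particular $\ket{\xi_i}$ is normalized, since $\sum_j\Delta(\rho)_{jj}=1$ --- while $\rho=\Delta(\rho)^{1/2}X\Delta(\rho)^{1/2}=\sum_i p_i\proj{\xi_i}$. This is exactly the claimed decomposition. The only non-routine ingredient is the dimension-$\le 3$ correlation-matrix theorem, which I expect to be the crux; the remaining steps are elementary manipulations of the congruence by $\Delta(\rho)^{1/2}$.
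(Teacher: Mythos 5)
Your proposal is correct and follows essentially the same route as the paper: reduce to $\Delta(\rho)>0$, form the correlation matrix $X=\Delta(\rho)^{-1/2}\rho\,\Delta(\rho)^{-1/2}$, apply the rank-one extremality result of Ref.~\cite{christensen_1979} valid for $d\le 3$, and undo the congruence by $\Delta(\rho)^{1/2}$. The paper obtains the corollary as an immediate by-product of exactly this argument inside the proof of Theorem~\ref{thm:correlation_matrices}, so nothing further is needed.
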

\noindent This in particular solves an open question raised in \cite{chitambar_2016-3} concerning the additivity of the coherence of assistance $C_a$ in dimension $d=3$; we will address this point in more detail in Sec.~\ref{sec:C_a}.

The most important point about relaxing the set $\M_m$ to $\MM_m$ is that $\MM_m$ can be represented by simple linear matrix inequalities, allowing us to reduce many of the intractable optimization problems involved in computing the one-shot assisted distillable coherence to efficiently computable semidefinite programs. In particular, since the fidelity function is known to be computable with an SDP \cite{watrous_2009,watrous_2013}, we can define the semidefinite program
\begin{equation}\begin{aligned}
	\wt{F}_{A,\O}\left( \rho , m\right) \coloneqq \max_{\omega \in \MM_m} \,F(\rho, \omega)
\end{aligned}\end{equation}
and the following SDP relaxation of the one-shot assisted distillation rate:
\begin{equation}\begin{aligned}
	&\wt{C}_{A,\O}^{(1),\ve}(\rho) \coloneqq \log \max \lset m \in \NN \bar \wt{F}_{A,\O} (\rho, m) \geq 1\! -\! \ve \rset\\
	&\hspace{0.5em}= \log \max \lset m\! \in\! \NN \!\bar \! F(\rho, \omega)\! \geq\! 1\!-\!\ve,\; \omega\! \in\! \DD,\; \norm{\Delta(\omega)}{\!\infty}\! \leq\! \frac{1}{m} \rset\\
	&\hspace{0.5em}= \logfloor{ - \log \min_{\omega \in \B_\ve(\rho)} \norm{\Delta(\omega)}{\infty} }.
\end{aligned}\end{equation}
In addition to establishing a general upper bound on the one-shot distillable coherence in the assisted setting, an application of Thm.~\ref{thm:correlation_matrices} then allows us to exactly characterize of the rate of distillation for low-dimensional systems.
\begin{corollary}For every $\rho \in \DD$ it holds that $C_{A,\O}^{(1),\ve}(\rho) \leq \wt{C}_{A,\O}^{(1),\ve}(\rho)$, with equality if $d \leq 3$.\end{corollary}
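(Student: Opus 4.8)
The plan is to read off the result directly from the definitions together with Theorem~\ref{thm:correlation_matrices}. The inequality $C_{A,\O}^{(1),\ve}(\rho) \leq \wt{C}_{A,\O}^{(1),\ve}(\rho)$ should follow immediately from the inclusion $\M_m \subseteq \MM_m$: since the feasible set in the optimization defining $\wt F_{A,\O}(\rho,m) = \max_{\omega \in \MM_m} F(\rho,\omega)$ contains the one defining $F_{A,\O}(\rho,m) = \max_{\omega \in \M_m} F(\rho,\omega)$, we get $F_{A,\O}(\rho,m) \leq \wt F_{A,\O}(\rho,m)$ for every $m$. Consequently, whenever $m$ is feasible for $C_{A,\O}^{(1),\ve}$, i.e.\ $F_{A,\O}(\rho,m) \geq 1-\ve$, it is also feasible for $\wt C_{A,\O}^{(1),\ve}$, so the maximum defining the latter is taken over a (weakly) larger set of integers, giving $C_{A,\O}^{(1),\ve}(\rho) \leq \wt C_{A,\O}^{(1),\ve}(\rho)$. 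Equivalently, one can argue at the level of the final closed-form expressions: $\vartheta(\omega) \geq \norm{\Delta(\omega)}{\infty}$ pointwise (every pure-state decomposition $\omega = \sum_i p_i \psi_i$ yields $\norm{\Delta(\omega)}{\infty} = \norm{\sum_i p_i \Delta(\psi_i)}{\infty} \leq \max_i \norm{\Delta(\psi_i)}{\infty} = \max_i \lnorm{\ket{\psi_i}}{\infty}^2$), so minimizing over $\B_\ve(\rho)$ preserves the inequality, and $x \mapsto \logfloor{-\log x}$ is monotone nondecreasing in $x$ on the relevant domain.

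For the equality when $d \leq 3$, the key input is the first half of Theorem~\ref{thm:correlation_matrices}, which states $\M_m = \MM_m$ for all $m$ in dimensions $2$ and $3$ (with the trivial dimension-$1$ case being immediate). Given this, $F_{A,\O}(\rho,m) = \max_{\omega \in \M_m} F(\rho,\omega) = \max_{\omega \in \MM_m} F(\rho,\omega) = \wt F_{A,\O}(\rho,m)$ for every $m$, so the two integer maximizations defining $C_{A,\O}^{(1),\ve}$ and $\wt C_{A,\O}^{(1),\ve}$ coincide term by term, forcing equality. Alternatively, at the level of the entropic expression one notes that the identity of the sets $\M_m$ and $\MM_m$ forces $\vartheta(\omega) = \norm{\Delta(\omega)}{\infty}$ for every $\omega \in \DD$ when $d \leq 3$: the $\geq$ direction is the pointwise bound above, and the $\leq$ direction follows because $\norm{\Delta(\omega)}{\infty} = 1/m$ together with $\omega \in \MM_m = \M_m$ exhibits (via Corollary~\ref{corr:same_diagonal} or directly from $\M_m = \MM_m$) a pure-state decomposition of $\omega$ with each component having $\lnorm{\ket{\psi_i}}{\infty}^2 = \norm{\Delta(\psi_i)}{\infty} = \norm{\Delta(\omega)}{\infty}$, whence $\vartheta(\omega) \leq \norm{\Delta(\omega)}{\infty}$. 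Substituting this identity into the two formulas from Theorem~2 and from the definition of $\wt C_{A,\O}^{(1),\ve}$ gives the claimed equality.

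Honestly, I do not expect a genuine obstacle here: the corollary is a formal consequence of results already proved in the excerpt (the inclusion $\M_m \subseteq \MM_m$ and Theorem~\ref{thm:correlation_matrices}), and the proof amounts to tracking the monotonicity of the optimizations involved. The only points requiring mild care are the treatment of $m$ as a continuous versus integer parameter —one should phrase the comparison in terms of feasibility of integer $m$ in the two ``$\log\max$'' problems, or equivalently invoke the already-derived closed forms involving $\logfloor{\cdot}$ and appeal to monotonicity of that map —and the observation that the minimum over $\B_\ve(\rho)$ of the larger function $\vartheta$ dominates the minimum of the smaller function $\norm{\Delta(\cdot)}{\infty}$, which is the standard fact that $f \geq g$ pointwise implies $\min f \geq \min g$ over a common domain. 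I would write the argument in a handful of lines, structured as: (i) pointwise comparison $\vartheta \geq \norm{\Delta(\cdot)}{\infty}$ on $\DD$ (equivalently $\M_m \subseteq \MM_m$), giving the inequality for all $d$; (ii) invoke $\M_m = \MM_m$ from Theorem~\ref{thm:correlation_matrices} to upgrade this to equality of $\vartheta$ and $\norm{\Delta(\cdot)}{\infty}$, hence of the two rates, when $d \leq 3$.
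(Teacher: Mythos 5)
Your proposal is correct and follows exactly the route the paper intends: the corollary is an immediate consequence of the inclusion $\M_m \subseteq \MM_m$ (giving $F_{A,\O}\leq \wt F_{A,\O}$, hence the inequality) and of Theorem~\ref{thm:correlation_matrices} ($\M_m = \MM_m$ for $d\leq 3$, giving equality), which is precisely why the paper states it without a separate proof. One small slip in your secondary phrasing: the map $x \mapsto \logfloor{-\log x}$ is monotone \emph{nonincreasing} in $x$, not nondecreasing, which is what turns $\min_{\omega\in\B_\ve(\rho)}\vartheta(\omega) \geq \min_{\omega\in\B_\ve(\rho)}\norm{\Delta(\omega)}{\infty}$ into $C_{A,\O}^{(1),\ve}(\rho) \leq \wt{C}_{A,\O}^{(1),\ve}(\rho)$; your primary argument via feasibility of integer $m$ does not rely on this and is fine as written.
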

As a particular case of this result, the zero-error assisted distillable coherence is given for $d\leq 3$ precisely by $C_{A,\O}^{(1),0}(\rho) = \log \left\lfloor \norm{\Delta(\rho)}{\infty}^{-1} \right\rfloor$. It is straightforward to see from Cor.~\ref{corr:same_diagonal} that, since any state $\rho$ in $d \leq 3$ admits a pure-state decomposition $\{p_{i}, \psi_{i}\}$ such that $\Delta(\psi_i) = \Delta(\rho)$ for all $i$, then so does $\rho^{\otimes n}$ for any $n$, and so the rate of assisted distillation of a many-copy state will depend only on $\norm{\Delta(\rho^{\otimes n})}{\infty} = \norm{\Delta(\rho)}{\infty}^n$. In the asymptotic limit, we therefore obtain the zero-error assisted distillation rate
\begin{equation}\begin{aligned}
	\lim_{n \to \infty} \frac{1}{n} C_{A,\O}^{(1),0}(\rho^{\otimes n}) = - \log \norm{\Delta(\rho)}{\infty}
\end{aligned}\end{equation}
for any qubit or qutrit system $\rho$. For larger dimensions of $\rho$, $\log \left\lfloor \norm{\Delta(\rho)}{\infty}^{-1} \right\rfloor$ provides an upper bound for the zero-error rate $C_{A,\O}^{(1),0}(\rho)$, tight in all dimensions for $\rho$ pure \cite{regula_2017}. Similarly, $-\log \norm{\Delta(\rho)}{\infty}$ is then an upper bound on the asymptotic zero-error assisted distillation rate, with equality for pure-state inputs.


\begin{figure*}[t]
\centering
\begin{adjustbox}{center}
\begin{subfigure}{.33\textwidth}
  \centering
  \includegraphics[width=6cm]{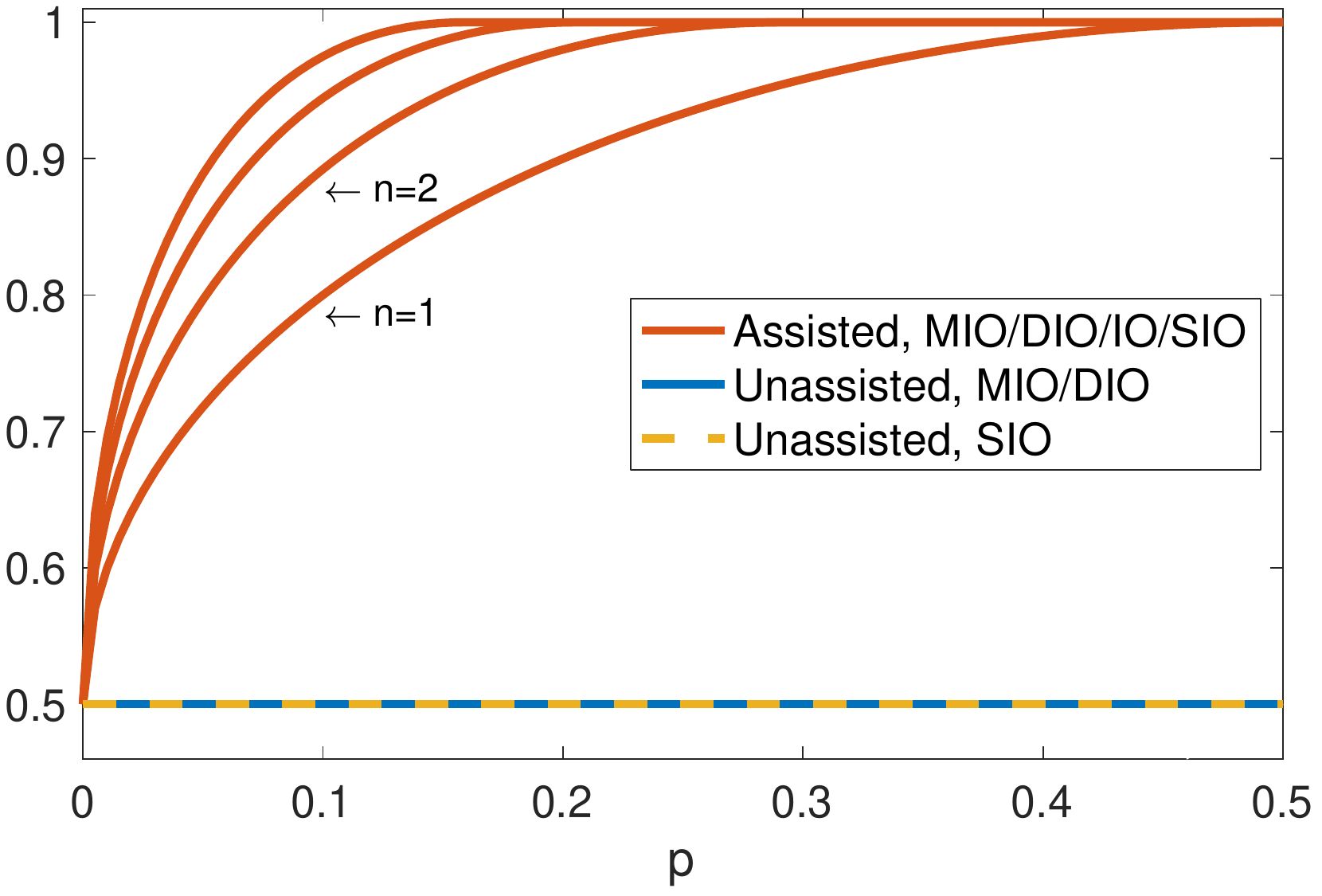}
  \caption{$\rho = p \proj{0} + (1-p) \proj{1}$\vphantom{$\left(\begin{smallmatrix}p & 2 p (1-p) \\ p (1-p) & 1-p\end{smallmatrix}\right)$}}
  \label{fig:sub1}
\end{subfigure}
\begin{subfigure}{.33\textwidth}
  \centering
  \includegraphics[width=6cm]{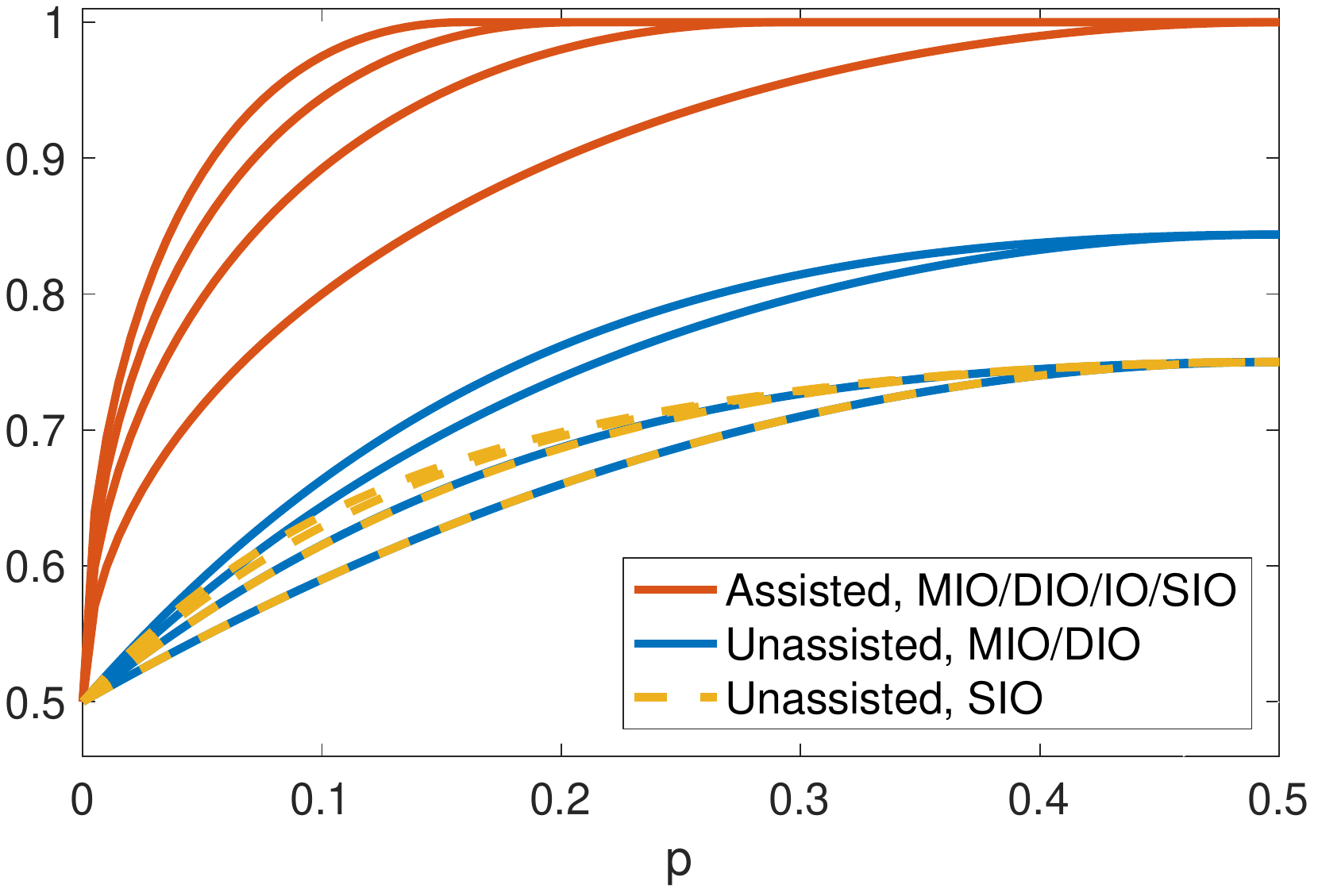}
  \caption{$\rho = \left(\begin{smallmatrix}p & p (1-p) \\ p (1-p) & 1-p\end{smallmatrix}\right)$}
  \label{fig:sub2}
\end{subfigure}
\begin{subfigure}{.33\textwidth}
  \centering
  \includegraphics[width=6cm]{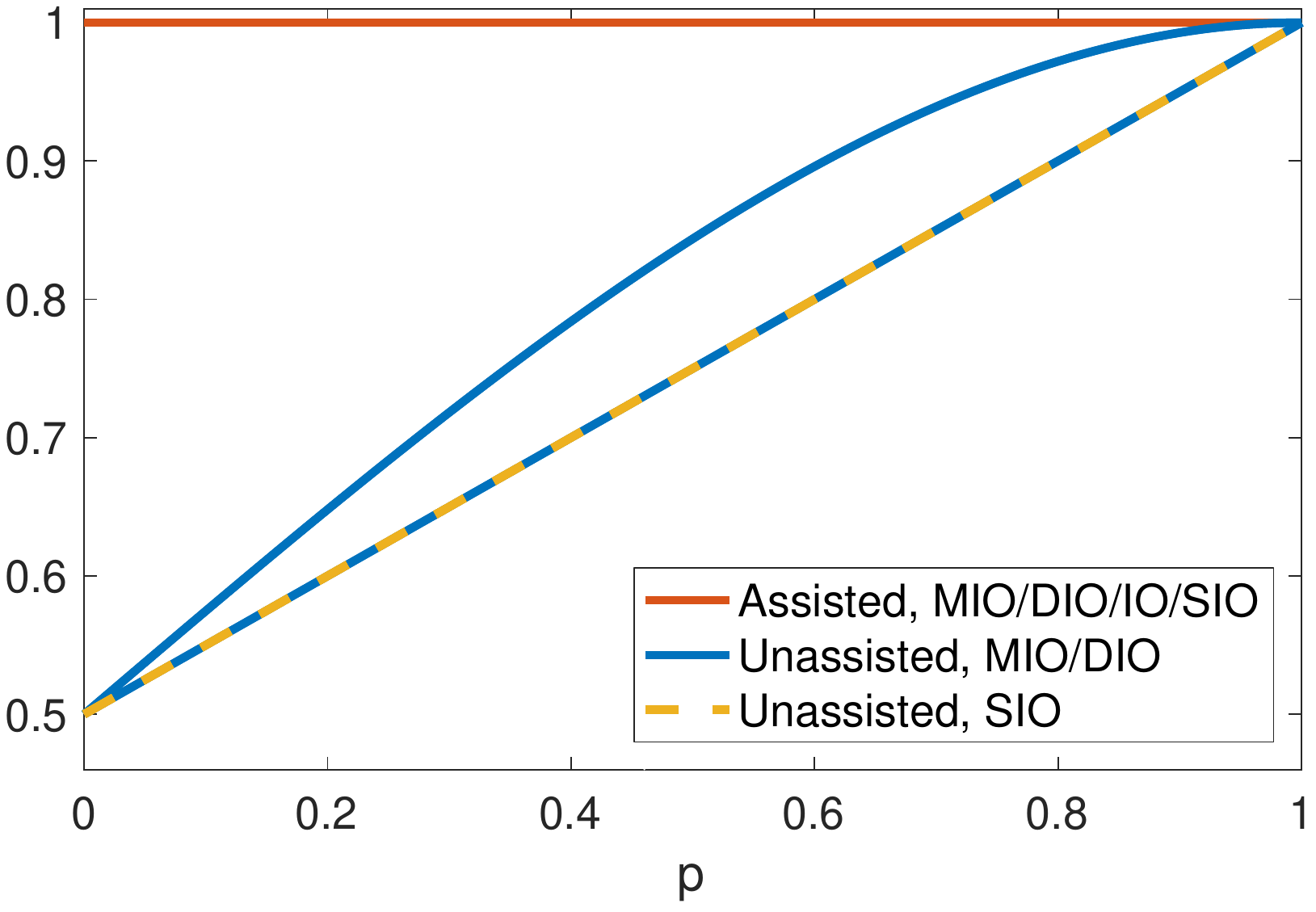}
  \caption{$\rho = p \Psi_2 + (1-p) \id/2$\vphantom{$\left(\begin{smallmatrix}p & 2 p (1-p) \\ 2 p (1-p) & 1-p\end{smallmatrix}\right)$}}
  \label{fig:sub3}
\end{subfigure}
\end{adjustbox}
\captionsetup{width=.97\linewidth,justification=raggedright}
\caption{How does increasing the number of copies affect the achievable fidelity of distillation of a single bit of coherence? The figures present a comparison between the fidelity of assisted distillation $F_{A,\O}(\rho^{\otimes n}, 2)$ as obtained in this work, as well as the fidelity of unassisted distillation $F_{\O}(\rho^{\otimes n}, 2)$ for $\O \in \{\MIO,\DIO\}$ as obtained in \cite{regula_2017} and for $\O = \SIO$ as obtained in \cite{zhao_2018-1,lami_2018-1}. In each case, we consider the distillation from the state $\rho^{\otimes n}$ for $n \in \{1,2,3,4\}$.\\
Figure (a) shows in particular the fundamental difference between assisted and unassisted distillation in that assistance allows for distillation from incoherent states. Figure (b) showcases the advantage provided by the assistance, as one can see that the state $\rho$ admits a value of $p < \frac{1}{2}$ such that perfect distillation of $\Psi_2$ is possible already from one copy of $\rho$ with assistance, while even four copies are insufficient without it. We further recall from \cite{lami_2018-1} that the SIO fidelity is bounded away from 1 for any number of copies of the state $\rho$ in (b). We finally remark the curious phenomenon in (c), where increasing the number of copies does not increase the achievable fidelity under SIO (previously noted in \cite{lami_2018-1}), while the increase in fidelity of distillation for MIO/DIO occurs only for odd number of copies $\rho^{\otimes n}$. At the same time, perfect assisted distillation is possible already for one copy of the state, since the maximally mixed state is as useful as the maximally coherent state for assisted distillation.}
\label{fig:comparison}
\end{figure*}


\section{Quantifying the fidelity of distillation}

We will now introduce an easily computable bound on the fidelity of distillation $F_{A,\O}$, and show that it is tight for all qubit and qutrit systems, leading to an analytical characterization of the distillation fidelity for any finite number of copies of a given state. To this end, let $\ket{\delta(\rho)}$ denote the $d$-dimensional vector obtained from the square roots of the diagonal elements of $\rho$, i.e.\ $\delta(\rho) = \diag(\sqrt{\Delta(\rho)})$. We can then notice the following relation:
\begin{theorem}\label{thm:fidelity}For any integer $m\geq 1$ it holds that $\begin{displaystyle}F_{A,\O} (\rho, m) \leq \frac{1}{m} \mnorm{\delta(\rho)}^2\end{displaystyle}$, with equality if $d \leq 3$ or if $\rho = \sigma^{\otimes n}$ for $n \in \mathbb{N}$ and a state $\sigma$ of dimension at most $3$.\end{theorem}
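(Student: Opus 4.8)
The plan is to prove the inequality for every $\rho$ and every integer $m\ge 1$ by a relaxation-and-data-processing argument, and then to read off equality in the stated cases from the existence of a pure-state decomposition of $\rho$ all of whose members carry the same diagonal $\Delta(\rho)$. For the upper bound I would start from the identity $F_{A,\O}(\rho,m)=\max_{\omega\in\M_m}F(\rho,\omega)$ of~\eqref{Fa simplified}, relax $\M_m$ to $\MM_m$, and apply the monotonicity of the fidelity under the dephasing channel, $F(\rho,\omega)\le F(\Delta(\rho),\Delta(\omega))$. Since $\{\Delta(\omega):\omega\in\MM_m\}$ is exactly the set of diagonal density matrices $\tau$ with $\norm{\tau}{\infty}\le 1/m$, this yields $F_{A,\O}(\rho,m)\le\max_\tau F(\Delta(\rho),\tau)$ over that set. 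As both states are diagonal, $F(\Delta(\rho),\tau)$ is the classical fidelity $\big(\sum_j\sqrt{\rho_{jj}\,\tau_{jj}}\big)^2$; the change of variables $\tau_{jj}=\omega_j^2/m$, together with the dual representation of the $m$-distillation norm applied to the nonnegative vector $\ket{\delta(\rho)}$ (for which an optimal dual vector may be chosen nonnegative), then identifies this constrained maximum with $\frac1m\mnorm{\delta(\rho)}^2$. This proves the bound.

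For equality when $d\le 3$ I would invoke Corollary~\ref{corr:same_diagonal} to obtain a decomposition $\rho=\sum_i p_i\psi_i$ with $\Delta(\psi_i)=\Delta(\rho)$ for all $i$, feed it into $F_{A,\O}(\rho,m)=\max\{\sum_i p_iF_\O(\psi_i,m):\rho=\sum_i p_i\psi_i\}$ from Eq.~\eqref{eq:fidelity_asst}, and use $F_\O(\psi_i,m)=\frac1m\mnorm{\ket{\psi_i}}^2$ to get the lower bound $\frac1m\sum_i p_i\mnorm{\ket{\psi_i}}^2$. The crucial observation is that $\mnorm{\cdot}$ is invariant under diagonal unitaries --- immediate from either of its two representations --- so it depends on a vector only through the moduli of its entries, i.e.\ only through $\Delta(\psi_i)$; since $\Delta(\psi_i)=\Delta(\rho)$ this forces $\mnorm{\ket{\psi_i}}=\mnorm{\delta(\rho)}$ for every $i$, and the lower bound collapses to $\frac1m\mnorm{\delta(\rho)}^2$, matching the upper bound. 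The tensor-power case follows by the same argument one level up: Corollary~\ref{corr:same_diagonal} applied to $\sigma$ gives $\sigma=\sum_i p_i\phi_i$ with $\Delta(\phi_i)=\Delta(\sigma)$, whence $\sigma^{\otimes n}=\sum_{i_1,\dots,i_n}p_{i_1}\cdots p_{i_n}\,\phi_{i_1}\otimes\cdots\otimes\phi_{i_n}$ is a pure-state decomposition every term of which has diagonal $\Delta(\sigma)^{\otimes n}=\Delta(\sigma^{\otimes n})$ --- even though $\sigma^{\otimes n}$ itself may have dimension exceeding $3$ --- so re-running the previous argument gives equality for $\rho=\sigma^{\otimes n}$.

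Every ingredient is already available (the simplification~\eqref{Fa simplified}, the inclusion $\M_m\subseteq\MM_m$, data processing for the fidelity, the dual formula for $\mnorm{\cdot}$, and Corollary~\ref{corr:same_diagonal}), so I do not expect a real obstacle. The most computation-heavy step is the identification of the constrained classical-fidelity maximum with $\frac1m\mnorm{\delta(\rho)}^2$ in the upper-bound argument; a minor nuisance is the degenerate case in which $\Delta(\rho)$ is not positive definite, which I would dispose of by the basis-permutation-and-direct-sum reduction already used in the proof of Theorem~\ref{thm:correlation_matrices}. The single conceptual point worth emphasizing is that the diagonal-unitary invariance of $\mnorm{\cdot}$ is precisely what turns Corollary~\ref{corr:same_diagonal} into a \emph{matching} lower bound, and that this mechanism is stable under tensor powers, which is what extends the equality case beyond dimension three.
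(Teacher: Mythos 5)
Your proposal is correct and follows essentially the same route as the paper: the upper bound via the relaxation $\M_m\subseteq\MM_m$, data processing under $\Delta$, and the identification of the resulting classical-fidelity maximum with $\frac1m\mnorm{\delta(\rho)}^2$ through the dual form of the $m$-distillation norm with a nonnegative optimizer; and the matching lower bound from Corollary~\ref{corr:same_diagonal} together with the fact that $\mnorm{\cdot}$ depends only on the moduli of the entries, with the tensor-power case handled exactly as you describe. Your explicit remark on diagonal-unitary invariance simply makes transparent a step the paper leaves implicit.
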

\begin{proof}
Let $\RR^d_+$ denote the set of all vectors $\ket{x}$ such that $x_i \geq 0$ for all $i$. We first make note of the fact that
\begin{equation}\begin{aligned}
	&\lsetr \ket{\delta(\omega)} \barr \omega \in \MM_m \rsetr \\
	=& \lsetr \ket{x} \barr \lnorm{\ket{x}}{\infty} \leq \frac{1}{\sqrt{m}},\; \lnorm{\ket{x}}{2} = 1,\; \ket{x} \in \RR^d_+ \rsetr.
\end{aligned}\end{equation}
This can be seen by noting that every matrix $\omega \in \MM_m$ is a normalized density matrix, which implies the conditions that $\ket{\delta(\omega)} \in \RR^d_+$ as well as $\Tr(\omega) = \lnorm{\ket{\delta(\omega)}}{2}^2 = 1$; furthermore, each such $\omega$ satisfies $\norm{\Delta(\omega)}{\infty} = \lnorm{\ket{\delta(\omega)}}{\infty}^2 \leq \frac{1}{m}$. We now use the fact that, for any vector $\ket{y} \in \RR^d_+$ s.t. $\lnorm{\ket{y}}{2} = 1$, to compute the $m$-distillation norm it suffices to optimize over vectors in $\RR^d_+$, that is 
\begin{equation}\begin{aligned}
 \mnorm{\ket{y}} = \max \lset \braket{y|\delta(\omega)} \bar \omega \in \MM_m \rset,
\end{aligned}\end{equation}
which can be seen by using the explicit form of an optimal solution of the above optimization obtained in \cite[Thm. 4]{regula_2017}. Using the monotonicity of fidelity under quantum channels, we now have
\begin{equation}\begin{aligned}
	F_{A,\O} (\rho,m) &\leq \wt{F}_{A,\O} (\rho,m)\\
	&= \max_{\omega \in \MM_m} F(\rho, \omega)\\
	&\leq \max_{\omega \in \MM_m} F(\Delta(\rho), \Delta(\omega))\\
	&= \max_{\omega \in \MM_m} \norm{\sqrt{\Delta(\rho)} \sqrt{\vphantom{\Delta(\rho)}\Delta(\omega)}}{1}^2\\
	&= \max_{\omega \in \MM_m} \left( \sum_i \delta(\rho)_i \,\delta(\omega)_i\right)^2\\
	&= \max \lsetr \braket{\delta(\rho)|\delta(\omega)}^2 \barr \omega \in \MM_m \rsetr\\
	&= \frac{1}{m} \mnorm{\delta(\rho)}^2.
\end{aligned}\end{equation}
This establishes the first part of the Theorem. To show achievability, note by Cor.~\ref{corr:same_diagonal} that any density matrix with $d \in \{2,3\}$ (or a tensor product thereof) admits a pure-state decomposition into $\rho = \sum_i p_i \psi_i$ such that $\Delta(\psi_i) = \Delta(\rho)\;\forall i$. Writing $\rho$ in this decomposition, we then have
\begin{align}
	F_{A,\O} (\rho,m) &\geq \sum_i p_i  F_\O \left({\psi_i}, m\right)\nonumber\\
	&= \sum_i p_i  \mnorm{\ket{\psi_i}}^2\\
	&= \mnorm{\ket{\psi_i}}^2\nonumber\\
	&= \mnorm{\delta(\rho)}^2\nonumber\tag*{\qedhere}.
\end{align}
\end{proof} 
Theorem \ref{thm:fidelity} allows us to compute the achievable distillation fidelity exactly for any number of copies of a qubit or qutrit system. We stress that $\mnorm{\cdot}$ admits a semi-analytical formula \cite{regula_2017}, making the evaluation of the fidelity straightforward:
\begin{equation}\begin{aligned}
	\mnorm{\delta(\rho)} = \lnorm{\delta(\rho)^\downarrow_{1:m-k^\star}}{1} + \sqrt{k^\star}\lnorm{\delta(\rho)^\downarrow_{m-k^\star+1:d}}{2} 
\end{aligned}\end{equation}
where $\delta(\rho)^\downarrow$ denotes the vector $\delta(\rho)$ with coefficients arranged in non-increasing order by magnitude, $\delta(\rho)^\downarrow_{a:b}$ refers to a subvector consisting of the corresponding range of coefficients of $\delta(\rho)^\downarrow$, and $k^\star \coloneqq \argmin_{1\leq k\leq m} \frac{1}{\sqrt{k}}\lnorm{\delta(\rho)^\downarrow_{m-k+1:d}}{2}$.
In particular, we have that
\begin{equation*}\begin{aligned}
	F_{A,\O}(\rho,2) = \begin{cases} 1, & \norm{\Delta(\rho)}{\infty} \leq \frac{1}{2}\\ \sqrt{\norm{\Delta(\rho)}{\infty}\left(1 - \norm{\Delta(\rho)}{\infty}\right)} + \frac{1}{2},\!\! &  \norm{\Delta(\rho)}{\infty} \geq \frac{1}{2} \end{cases}
\end{aligned}\end{equation*}
for any qubit or qutrit state $\rho$. In the case of $d=3$ and $m=3$, we can similarly compute the fidelity as
\begin{equation}\begin{aligned}
	F_{A,\O}(\rho,3) = \frac13 \left(\sqrt{\rho_{11}}+\sqrt{\rho_{22}}+\sqrt{\rho_{33}}\right)^2
\end{aligned}\end{equation}
where $\rho_{ii}$ denote the diagonal elements of $\rho$.

Importantly, the setting of assisted coherence distillation in which Alice and Bob share a two-qubit system has recently found application in experimental setups~\cite{wu_2017,wu_2017-1}. This immediately demonstrates the applicability of our characterization to such experimental investigations. We emphasize that in settings where only a finite number of copies of the total state is available, the one-shot fidelity $F_{A,\O}$ and rate $C_{A,\O}^{(1),\ve}$ are more meaningful than asymptotic figures of merit such as the regularized coherence of assistance.

To explicitly and quantitatively demonstrate the advantage provided by assistance in the task of non-asymptotic coherence distillation, one may wish to compare our results with the case of unassisted distillation \cite{regula_2017,zhao_2018-1,lami_2018-1}. A comparison between the achievable fidelities of distillation in the few-copy setting, including for the states considered experimentally in \cite{wu_2017}, is presented in Fig. \ref{fig:comparison}.

Additionally, based on numerical evidence, we can conjecture that $\wt{F}_{A,\O} (\rho, m) = \frac{1}{m} \mnorm{\delta(\rho)}^2$ in any dimension, which would give a clear interpretation to the considered quantity $\mnorm{\delta(\rho)}$ as the fidelity with respect to the set $\MM_m$. Note in particular that \cite{regula_2017} $\mnorm{\delta(\rho)}^2 = m \!\iff\! \norm{\Delta(\rho)}{\infty} \leq \frac{1}{m} \!\iff\! \rho \in \MM_m$.


\section{Coherence of assistance}\label{sec:C_a}

A characterization of the protocol of assisted coherence distillation in the limit of infinitely many i.i.d.\ copies was introduced in Ref. \cite{chitambar_2016-3}, where the relation between the operational quantity $C^{\infty}_{A,\IO}$ and the coherence of assistance was explored.  Recall that the coherence of assistance, defined in~\eqref{Ca}, satisfies
\begin{equation}\begin{aligned}
	C^{\infty}_a(\rho) &\coloneqq \lim_{n \to \infty} \frac{1}{n} C_a(\rho^{\otimes n}) = C^{\infty}_{A,\IO}(\rho) = S(\Delta(\rho)),
\end{aligned}\end{equation}
and the results of our work show that one can replace IO with any set of operations SIO, DIO, MIO.

In \cite{chitambar_2016-3} it was shown that $C_a(\rho) = C^{\infty}_a(\rho)$ for any qubit state $\rho$, but the problem of additivity of $C_a(\rho)$ for qutrit systems was left as an open question, later explored also in \cite{zhao_2017-1}. Using the results of our work, we answer it in the affirmative\footnote{The fact that coherence of assistance is not additive for $d>3$ has been correctly pointed out in Ref. \cite{chitambar_2016-3}, but the proof provided there has a gap.}. 

\begin{proposition}\label{prop:Ca}
$C_a(\rho) = C^{\infty}_a(\rho)$ for any state $\rho$ with $d \leq 3$, but in $d \geq 4$ there exist states such that $C^{\infty}_a > C_a(\rho)$.
\end{proposition}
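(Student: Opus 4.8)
The plan is to treat the two claims separately, since they rely on different structural results.

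\textbf{The case $d \le 3$.} For this part I would deduce additivity directly from Corollary~\ref{corr:same_diagonal}. Fix a state $\rho$ with $d\le 3$ and any $n \in \NN$. Since $\rho^{\otimes n}$ also has the property guaranteed by the corollary (because if $\rho = \sum_i p_i \psi_i$ with $\Delta(\psi_i) = \Delta(\rho)$, then $\rho^{\otimes n} = \sum_{\vec\imath} p_{i_1}\cdots p_{i_n}\, \psi_{i_1}\otimes\cdots\otimes\psi_{i_n}$, and each product pure state has $\Delta(\psi_{i_1}\otimes\cdots\otimes\psi_{i_n}) = \Delta(\rho)^{\otimes n} = \Delta(\rho^{\otimes n})$), we obtain
\begin{equation}\begin{aligned}
C_a(\rho^{\otimes n}) &\geq \sum_{\vec\imath} p_{i_1}\cdots p_{i_n}\, S\!\left(\Delta(\psi_{i_1}\otimes\cdots\otimes\psi_{i_n})\right)\\
&= S\!\left(\Delta(\rho^{\otimes n})\right) = n\, S(\Delta(\rho)).
\end{aligned}\end{equation}
On the other hand, the general bound $C_a(\sigma) \leq S(\Delta(\sigma))$ holds for every state $\sigma$, since in any decomposition $\sigma = \sum_i q_i \eta_i$ one has $\sum_i q_i S(\Delta(\eta_i)) \leq \sum_i q_i S(\Delta(\eta_i)) + \chi \leq S(\Delta(\sigma))$ by concavity of the von Neumann entropy applied to the classical ensemble $\{q_i, \Delta(\eta_i)\}$ (equivalently, $C_a$ is the convex roof extension of $S\circ\Delta$, which is dominated by the concave hull $S\circ\Delta$ itself). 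Hence $C_a(\rho^{\otimes n}) = n\, S(\Delta(\rho))$ for all $n$, so $C_a(\rho) = S(\Delta(\rho)) = C_a^\infty(\rho)$.

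\textbf{The case $d \ge 4$.} Here I would exhibit a concrete witness of strict superadditivity. The natural candidate is a state $\rho$ on $\H_4$ whose only pure-state decomposition that could possibly be optimal for $C_a$ is forced to contain states of lower coherence than the diagonal bound, yet whose tensor powers admit better decompositions. Concretely, I would take $\rho$ to be (proportional to) an extremal correlation matrix of rank $\geq 2$ in $\MM_4$ — guaranteed to exist by Theorem~\ref{thm:correlation_matrices} and the references therein — rescaled to be a density matrix; such a $\rho$ has $\Delta(\rho) = \id/4$ but, by extremality and rank $\geq 2$, admits \emph{no} decomposition into pure states all of which have maximal diagonal entropy $\log 4$, so that $C_a(\rho) < \log 4 = S(\Delta(\rho))$. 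Since $C_a^\infty(\rho) = S(\Delta(\rho)) = \log 4$ always, this gives $C_a^\infty(\rho) > C_a(\rho)$. Alternatively, if pinning down the exact value of $C_a(\rho)$ on such an extremal matrix is awkward, one can instead use a continuity/perturbation argument: $C_a$ is continuous and $C_a(\rho) \le \log 4$ with strict inequality on an open neighbourhood of the bad extremal point (since the set of states admitting a maximal-entropy pure decomposition is closed and misses that point), while $C_a^\infty \equiv \log 4$ on that neighbourhood.

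\textbf{Main obstacle.} The $d\le 3$ direction is essentially immediate from Corollary~\ref{corr:same_diagonal}. The real work is in the $d\ge 4$ direction: one must convert the \emph{geometric} failure $\M_d \subsetneq \MM_d$ (existence of high-rank extremal correlation matrices) into a genuine \emph{strict} gap for the entropic quantity $C_a$, i.e. show not merely that the natural decomposition fails but that \emph{every} decomposition of $\rho$ has average diagonal entropy strictly below $\log 4$. The cleanest route is to pick a specific rank-$2$ extremal correlation matrix on $\H_4$ (such examples appear explicitly in \cite{li_1994,grone_1990}), argue that any pure state $\psi$ in a decomposition of $\rho$ must satisfy $\Delta(\psi) = \id/4$ in order to be part of an optimal ensemble achieving $S(\Delta(\rho))$ — which would force $\rho$ itself into $\M_4$, contradicting extremality — and then either compute $C_a(\rho)$ directly for this small explicit example or invoke compactness to conclude $C_a(\rho) < \log 4$ strictly. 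I expect verifying the inequality is strict (rather than just $\le$) to be the subtle point, and would handle it by the closedness argument sketched above.
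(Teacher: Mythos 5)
Your proposal matches the paper's proof: for $d\le 3$ it rests on Corollary~\ref{corr:same_diagonal} to produce a decomposition whose members all have diagonal $\Delta(\rho)$ (the paper skips your tensor-power/concavity step by simply citing the elementary bound $C_a^\infty(\rho)\ge C_a(\rho)$), and for $d\ge 4$ it uses exactly the paper's witness, a rank-$\ge 2$ extremal correlation matrix guaranteed by Theorem~\ref{thm:correlation_matrices}, observing that no pure-state ensemble for it can consist entirely of states with flat diagonal. The strictness issue you flag is resolved just as you suggest: the maximum defining $C_a$ is attained (Carath\'eodory plus compactness of the set of bounded-size ensembles), so the optimal ensemble itself contains a state with $S(\Delta(\psi_i))<\log d$, yielding $C_a(\rho)<\log d=C_a^\infty(\rho)$; your alternative ``neighbourhood'' argument is unnecessary (and as stated $C_a^\infty\equiv\log 4$ fails off the maximally mixed diagonal slice), but the main route is sound.
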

\begin{proof}
By Cor.~\ref{corr:same_diagonal}, every state $\rho$ in $d \leq 3$ admits a pure-state decomposition into states with diagonal elements equal to those of $\rho$. Writing $\rho$ in this decomposition, we have
\begin{equation}\begin{aligned}
	C_a(\rho) & \geq \sum_i p_i S(\Delta(\rho)) = S(\Delta(\rho)) = C^{\infty}_a(\rho).
\end{aligned}\end{equation}
Since the converse inequality $C^{\infty}_a (\rho) \geq C_a(\rho)$ is elementary, the first part of the result follows.

When $d\geq4$, from Thm.~\ref{thm:correlation_matrices} we know that there exist states $\rho \in \MM_d$ such that $\Delta(\rho)=\frac{1}{d}\id$ but there is no ensemble of pure states $\psi_{i}$ satisfying $\rho=\sum_{i} p_{i} \psi_{i}$ and $\Delta(\psi_i) = \frac{1}{d}\id$ for all $i$. This entails that for all such ensembles
\begin{equation}\begin{aligned}
	C_{a}^{\infty}(\rho) = S\left( \Delta(\rho)\right) = \log d > \sum_i p_i S(\Delta(\psi_i))
\end{aligned}\end{equation}
as the uniform distribution is the unique maximizer of the Shannon entropy. Maximizing over all ensembles yields $C^{\infty}_a(\rho) > C_a(\rho)$, as claimed.
\end{proof}
We remark that Prop.~\ref{prop:Ca} disproves a claim in Ref. \cite{uhlmann_2010} that $C_a$ is additive for any $d$.

\section{Conclusions}

We have studied the task of assisted distillation of coherence in non-asymptotic regimes and introduced a mathematical framework for its characterization. We established an exact quantification of the maximal fidelity of distillation as well as the best achievable distillation rates in the non-asymptotic setting, deriving computable SDP and analytical results for low-dimensional systems.

One of the implications of the characterization presented herein is that the best achievable rate of assisted distillation is the same regardless of the class of operations used by Bob in the protocol.  We remark that \cite{streltsov_2017-1} considered also more general settings of asymptotic assisted distillation, for example the scenario where Alice is restricted to performing incoherent local operations, as well as the case where Alice and Bob together can perform a larger class of operations called the separable incoherent operations. Interestingly, it was shown that the asymptotic rate of assisted distillation is the same in all of these cases. Our results extend this analysis by showing that even by allowing Bob access to the maximal set of free operations MIO one still cannot improve the rate of assisted distillation, even in the non-asymptotic regime.

The results provide insight into both the operational characterization as well as the mathematical formalism of the resource theory of quantum coherence, contributing to a better understanding of this fundamental resource, particularly in practical and experimentally-relevant scenarios. Exploiting the similarities between coherence and other resource theories such as thermodynamics and entanglement, we hope that our framework can find use beyond the theory of quantum coherence, as well as in generalized settings of assisted distillation.\\[.5em]

\noindent\textit{Acknowledgments.}--- We are grateful to Gerardo Adesso, Eric Chitambar, Kun Fang, Min-Hsiu Hsieh, Jamie Sikora, and Xin Wang for useful discussions. B.R.\ and L.L.\ acknowledge financial support from the European Research Council (ERC) under the Starting Grant GQCOP (Grant No.~637352). A.S.\ acknowledges financial support by the National
Science Center in Poland (POLONEZ UMO-2016/21/P/ST2/04054) and the
European Union's Horizon 2020 research and innovation programme
under the Marie Sk\l{}odowska-Curie grant agreement No. 665778. \\[.5em]
\textit{Note.}--- During the completion of this work, an independent study of one-shot assisted distillation of coherence under IO was reported in \cite{vijayan_2018} using a different set of methods.

\bibliographystyle{apsrev4-1}
\bibliography{main}

\end{document}